\newtheorem{Theorem}{Theorem}
\newtheorem{Lemma}{Lemma}
\newtheorem{Proposition}{Proposition}
\newtheorem{Remark}{Remark}
\newtheorem{Definition}{Definition}
\newenvironment{proof}{\vspace{1ex}\noindent{\bf Proof}\hspace{0.5em}}
	{\hfill\qed\vspace{1ex}}
\def\sqw{\hfill\hbox{\lower.1ex\hbox{$\sqcup$} \kern-1.02em\lower.1ex\hbox{$\sqcap$}}\ }
\DeclareMathOperator*{\argmin}{\arg\min}
\DeclareMathOperator*{\diag}{diag}
\newcommand{\yv}{\mathbf{y}}
\newcommand{\xv}{\mathbf{x}}
\newcommand{\hv}{\mathbf{h}}
\newcommand{\vv}{\mathbf{v}}
\newcommand{\AM}{\mathbf{A}}
\newcommand{\DM}{\mathbf{D}}
\newcommand{\WM}{\mathbf{W}}
\newcommand{\UM}{\mathbf{U}}
\newcommand{\VM}{\mathbf{V}}
\newcommand{\PP}{\mathbb{P}}
\newcommand{\R}{\mathbb{R}}
\newcommand{\N}{\mathbb{N}}
\begin{document}

\begin{frontmatter}

\title{Sparse Estimation From Noisy Observations\\  of an Overdetermined Linear System}

\author[First]{Liang Dai}
\author[First]{and Kristiaan Pelckmans}

\address[First]{Division of Systems and Control,
	Department of Information Technology, \\
	Uppsala University, Sweden \\
	e-mail: liang.dai@it.uu.se, kristiaan.pelckmans@it.uu.se.}

\maketitle
\begin{keyword}                           
	System identification;   
	Parameter estimation;
	Sparse estimation.
\end{keyword}

\begin{abstract}
	This note studies a method for the estimation of a finite number of unknown parameters from linear equations, which are perturbed by Gaussian noise.
	In case the unknown parameters have only few nonzero entries, the proposed estimator performs more efficiently than a traditional approach.
	The method consists of three steps:
	(1) a classical Least Squares Estimate (LSE),
	(2) the support is recovered through a Linear Programming (LP) optimization problem which can be computed using a soft-thresholding step,
	(3) a de-biasing step using a LSE on the estimated support set.
	The main contribution of this note is a formal derivation of an associated ORACLE property of the final estimate.
	That is, with probability 1, the estimate equals the LSE based on the support of the {\em true} parameters when the number of observations goes to infinity.
\end{abstract}
\date{}
\end{frontmatter}
\begin{section}{Problem settings}

This note considers the estimation of a sparse parameter vector from noisy observations of a linear system.
The formal definition and assumptions of the problem are given as follows.
Let $n>0$ be a fixed number, denoting the dimension of the underlying parameter vecto, and let $N>0$ denote the number of equations ('observations').
The observed signal $\yv\in\R^N$ obeys the following system:
\begin{equation}
 \yv = \AM \xv^0 + \vv,
 \label{eq.mdl}
\end{equation}
where the elements of the vector $\xv^0\in \R^{n}$ are considered to be the fixed but unknown parameters of the system.
Moreover, it is assumed that $\xv_0$ is $s$-sparse (i.e. there are $s$ nonzero elements in the vector).
Let $\mathcal{T}\subset\{1, \dots, n\}$ denote the support set of $\xv^0$ (i.e. $\xv^0_{i}=0\Leftrightarrow i\not\in \mathcal{T}$) and ${\mathcal{T}}^{c}$ be the complement of $\mathcal{T}$, i.e. $\mathcal{T}\bigcup {\mathcal{T}}^{c} = \{1,2,\cdots, n\}$ and $\mathcal{T}\bigcap {\mathcal{T}}^{c} = \emptyset$.
The elements of the vector $\vv \in \R^{N}$ are assumed to follow the following distribution
\begin{equation}\vv \sim \mathcal{N} (0,c I_{N}),\end{equation}
where $0<c\in\mathbb{R}$.

Applications of such setup appear in many places, to name a few, see the applications discussed in \cite{c9} on the detection of nuclear material, and in \cite{c10} on model selection for aircraft test modeling (see also the Experiment 2 in \cite{c1} on the model selection for the AR model). In the experiment section, we will demonstrate an example which finds application in line spectral estimation, see \cite{c14}.

The matrix $\AM \in \R^{N\times n}$ with $N> n$ is the sensing matrix.
Such a setting ($\AM$ is a 'tall' matrix) makes it different from the setting studied in compressive sensing,
where the sensing matrix is 'fat', i.e. $N \ll n$. For an introduction to the compressive sensing theory, see e.g. \cite{c2,c16}.

 Denote the Singular Value Decomposition (SVD) of matrix $\AM\in\R^{N\times n}$ as
 \begin{equation}
	\AM = \UM\Sigma \VM^{T},
 	\label{eq.svd}
 \end{equation}
in which $\UM \in \R^{N\times n}$ satisfies $\UM ^{T} \UM = I_n$,
 $\VM \in \R^{n\times n}$ satisfies $\VM ^{T} \VM = I_n$,
and $\Sigma\in \R^{n\times n}$ is a diagonal matrix $\Sigma = \diag(\sigma_1(\AM), \sigma_2(\AM), \ldots, \sigma_n(\AM))$.
The results below make the following assumptions on $\AM$:
\begin{Definition}
	We say that $\{\AM\in\R^{N\times n}\}_N$ are sufficiently rich if there exists a finite $N_0$ and  $0<c_1\le c_2$
	such that for all $N>N_0$ the corresponding matrices $\AM\in\R^{N\times n}$ obey
	\begin{equation}
    	c_1\sqrt{N}\le \sigma_1(\AM)\le\sigma_2(\AM)\le\ldots\le\sigma_n(\AM)\le c_2\sqrt{N},
    		\label{eq.sing}
 	\end{equation}
	where $\sigma_i(\AM)$ denotes the $i$-th singular value of the matrix $\AM$, $c_1, c_2 \in \mathbb{R}^{+}$.
 \end{Definition}
Note that the dependence of $\AM$ on $N$ is not stated explicitly in order to avoid notational overload.

 In \cite{c1} and \cite{c12}, the authors make the assumption on $\AM$ that the sample covariance matrix
 $\frac{1}{N}{\AM}^{T}\AM$
 converges to a finite, positive-definite matrix:
 \begin{equation}
	 \lim_{N\rightarrow\infty} \frac{1}{N}{\AM}^{T}\AM = \DM\succ 0.
     \label{eq.sing2}
 \end{equation}
This assumption is also known as {\em Persistent Excitation} (PE), see e.g. \cite{c6}.
Note that our assumption in Eq. (\ref{eq.sing}) covers a wider range of cases.
For example, Eq. (\ref{eq.sing}) does not require the singular values of $\frac{1}{\sqrt{N}}{\AM}$ to converge, while only requires that they lie in  $[c_1,c_2]$ when $N$ increases.

Classically, properties of the Least Square Estimate (LSE) under the model given in Eq. (1) are given by the Gauss-Markov theorem.
It says that the Best Linear Unbiased Estimation (BLUE) of $\xv^{0}$ is the LSE under certain assumptions on the noise term. For the Gauss-Markov theorem, please refer to \cite{c3}.
However, the normal LSE does not utilize the 'sparse' information of $\xv^{0}$, which raises the question that whether it is possible to improve on the normal LSE by exploiting this information. In the literature, several approaches have been suggested, which can perform as if the true support set of $\xv^{0}$ were known. Such property is termed as the ORACLE property in \cite{c7}. In \cite{c7}, the SCAD (Smoothly Clipped Absolute Deviation) estimator is presented, which turns out to solve a non-convex optimization problem; later in \cite{c12}, the ADALASSO (Adaptive Least Absolute Shrinkage and Selection Operator) estimator is presented. The ADALASSO estimator consists of two steps, which implements a normal LSE in the first step, and then solves a reweighed Lasso optimization problem, which is convex. Recently, in \cite{c1}, two LASSO-based estimators, namely the 'A-SPARSEVA-AIC-RE' method and the 'A-SPARSEVA-BIC-RE' method, are suggested. Both methods need to do the LSE in the first step, then solve a Lasso optimization problem, and finally redo the LSE estimation.

 \begin{Remark}
  This note concerns the case that $\xv^0$ is a fixed sparse vector. However, when sparse estimators are applied to estimate non-sparse vectors, erratic phenomena could happen. For details, please see the discussions in \cite{c15,c19}.
 \end{Remark}

In this note, we will present another approach to estimate the sparse vector $\xv^{0}$, which also possesses the ORACLE property with a lighter  computational cost. The proposed method consists of three steps, in the first step, a normal LSE is conducted, the second step is to solve a LP (Linear Programming) problem, whose solution is given by a soft-thresholding step, finally, redo the LSE based on the support set of the estimated vector from the previous LP problem. Details will be given in Section 2.

In the following, the lower bold case will be used to denote a vector and capital bold characters are used to denote matrices. The subsequent sections are organized as follows.
In section 2, we will describe the algorithm in detail and an analytical solution to the LP problem is given.
In Section 3, we will analyze the algorithm in detail.
In Section 4, we conduct several examples to illustrate the efficacy of the proposed algorithm and compare the proposed algorithm with other algorithms.
Finally, we draw conclusions of the note.
\end{section}

\begin{section}{Algorithm Description}

The algorithm consists of the following three steps:
\begin{itemize}

\item{\em LSE:}
	Compute the LSE of $\xv^{0}$, denoted as $\xv^{ls}$.
\item{\em LP:}
	Choose $0<\epsilon<1$ and solve the following Linear Programming problem:
	\begin{equation}
		\xv^{lp} = \argmin_\xv \| \xv \|_1 \mbox{\ s.t. \ }  \| \xv-\xv^{ls} \|_{\infty} \le \lambda,
		\label{eq.lp}
	\end{equation}
	where $\lambda = \sqrt{\frac{2n}{N^{1-\epsilon}}}$.
	Detect the support set ${\mathcal{T}}^{lp}$ of $\xv^{lp}$.
\item{\em RE-LSE:}
	Compute the LSE of $\xv^{0}$ based on ${\mathcal{T}}^{lp}$. Form the matrix $\AM_{{\mathcal{T}}^{lp}}$, which contains the columns of $\AM$ indexed by ${\mathcal{T}}^{lp}$ and let $\AM^{\dagger}_{{\mathcal{T}}^{lp}}$ denote its pseudo-inverse.
	Then the final estimation $\xv^{rels}$ is given by
 $\xv_{{\mathcal{T}}^{lp}}^{rels} = \AM_{{\mathcal{T}}^{lp}}^{\dagger}\yv$, and $\xv_{{{\mathcal{T}}^{lp}}^{C}}^{rels} = \mathbf{0}$, in which ${{\mathcal{T}}^{lp}}^{C}$ denotes the complement set of ${\mathcal{T}}^{lp}$.

\end{itemize}
Note that the LP problem has an analytical solution. Writing the $\infty$ norm constraint explicitly as
 \begin{align}
	&\xv^{lp} = \argmin_\xv \sum_{i=1}^{n}|x_i| \\ \nonumber
    &\mbox{\ s.t. \ }  | x_i-x^{ls}_{i} | \le \lambda, \mbox{\ for \ } i = 1 \ldots n.
    \label{eq.lp2}
\end{align}
We can see that there are no cross terms in both the objective function and the constraint inequalities, so each component can be optimized separately. From this observation, the solution of the LP problem is given as
\begin{equation}
x^{lp}_i =
\begin{cases}
 0,                       &\mbox{\ if \ }  |x^{ls}_{i}| \le \lambda\\ \nonumber
 x^{ls}_{i} - \lambda,  &\mbox{\ if \ }  x^{ls}_{i} > \lambda\\ \nonumber
 x^{ls}_{i} + \lambda,  &\mbox{\ if \ }  x^{ls}_{i} < -\lambda \nonumber
 \end{cases}
\end{equation}for $i = 1,2,\cdots,n$.
 Such a solution $\xv^{lp}$ is also referred to as an application of the soft-thresholding operation to $\xv^{ls}$, see e.g. \cite{c5}. Several remarks related to the algorithm are given as follows.
\begin{Remark}
	Note that the tuning parameter $\lambda$ chosen as $\lambda^2 = \frac{2n}{N^{1-\epsilon}}$
	is very similar to  the one (which is proportional to $\frac{2n}{N}$) as given in \cite{c1} based on the Akaike's Information Criterion (AIC).
\end{Remark}
\begin{Remark}
The {\em order} of $\lambda$ chosen as $-\frac{1}{2}+\frac{\epsilon}{2}$  is essential to make the asymptotical oracle property hold.  Intuitively speaking, such a choice can make the following two facts hold.
\begin{enumerate}
\item
Whenever $\epsilon > 0$, $\xv^0$ will lie in the feasible region of Eq. (\ref{eq.lp}) with high probability.
\item
The threshold decreases 'slower' (in the order of $N$) than the variance of the pseudo noise term $\VM\Sigma^{-1}\UM^{T}\vv$. With such a choice, it is possible to get a good approximation of the support set of $\xv^0$ in the second step.
\end{enumerate}

\end{Remark}

\begin{Remark}
Though the formulation of Eq. (\ref{eq.lp}) is inspired by the Dantzig selector in \cite{c8}, there are some differences between them.
\begin{enumerate}
 \item
 As pointed out by one of the reviewer, both the proposed method and the Dantzig selector lie in the following class
\begin{equation}
  \min_\xv \|\xv\|_1 \\
    \mbox{\ s.t. \ }  \|\mathbf{W} (\xv-\xv^{ls})\|_{\infty} \le \lambda.
    \label{eq.lp3}
\end{equation}
If $\mathbf{W}$ is chosen as the identity matrix, we obtain the proposed method; If $\mathbf{W}$ is chosen as $\AM^{T}\AM$, then we obtain the same formulation as given by the Dantzig selector.

\item
 As pointed out in \cite{c13}, the solution path of the Dantzig selector behaves erratically with respect to the value of the regularization parameter. However, the solution path of Eq. (\ref{eq.lp}) with respect to the value of $\lambda$ behaves regularly, which is due to the fact that, given $\lambda$, the solution to Eq. (\ref{eq.lp}) is given by the application of the soft-thresholding operation to the LSE estimation. When $\lambda$ increases, the solution will decrease (or increase) linearly and when it hits zero, it will remain to be zero. This in turn implies computational advances when trying to find a $s$-sparse solution for given $s$.  A simple illustration of the solution path is given. Assume that $n = 4$ and $\xv^{ls} = [2,0.5,-1,-1.5]^{T}$, then the solution path to Eq. (\ref{eq.lp}) w.r.t. $\lambda$ is given in Fig. \ref{fig.path}).
\begin{figure}[htbp]
\centering
\includegraphics[width = 75mm]{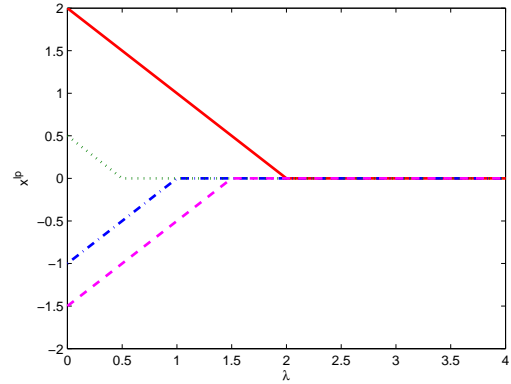}
  \caption{An illustration of the solution path to Eq. (\ref{eq.lp}) w.r.t. $\lambda$. When $\lambda$ equals zero, the solution to Eq. (\ref{eq.lp}) is $\xv^{ls}$; when $\lambda$ increases, the solution trajectory shrinks linearly to zero and then remains zero.}
\label{fig.path}
\end{figure}
\end{enumerate}
\end{Remark}
\begin{Remark}
From a computational point of view, the SCAD method needs to solve a non-convex optimization problem which will suffer from the multiple local minima, see the discussions in \cite{c18}. Hence, the proposed scheme is mainly compared with techniques which can be solved as convex optimization problems. In Table \ref{comp.table}, we list the computational steps needed for different methods. In the table, the term ST means the soft-thresholding operation, the term Re-LSE means 'redo the LSE estimation after detecting the support set of the result obtained from the second step'. For a more precise description, see the Algorithm Description section. From this table, we can see that in the first step, all the methods need to do a LSE estimation; in the second step, except the proposed method (which is denoted by LP + Re-LSE), the other methods need to solve a LASSO optimization problem, which is more computationally involved than a simple soft-thresholding operation as needed by the proposed method; except the ADALASSO method, the other methods need to do a Re-LSE step.  From this table, we can also see that the main computational burden for the proposed method comes from the LSE step.
\begin{table}
\caption{Computational steps needed for different methods}
    \begin{tabular}{|l|l|l|l|}
    \hline
    ~                 & Step 1 & Step 2 & Step 3 \\ \hline
    LP + Re-LSE       & LSE    & ST     & Re-LSE \\ \hline
    ADALASSO          & LSE    & LASSO  & ~      \\ \hline
    A-SPARSEVA-AIC-RE & LSE    & LASSO  & Re-LSE \\ \hline
    A-SPARSEVA-BIC-RE & LSE    & LASSO  & Re-LSE \\ \hline
    \end{tabular}
    \label{comp.table}
\end{table}
\end{Remark}

\begin{Remark}
Note that the proposed method does not need an "adaptive step" (i.e. to reweigh the cost function) in order to achieve the ORACLE property, which is different from the methods presented in \cite{c1} and \cite{c12}.
\end{Remark}

\end{section}

\begin{section}{Analysis of the algorithm}
In this section, we will discuss the properties of the presented estimator. In the following, we will denote the smallest singular value of $\AM$ as $\sigma$.

\begin{Remark}

In the following sections, we assume that the noise variance equals one, i.e. $c=1$, for the following reasons:
\begin{enumerate}
\item
When the noise variance is given in advance, one can always re-scale the problem accordingly.
\item
 Even if the noise variance is not known explicitly (but is known to be finite),  the support of $\xv^{0}$ will be recovered asymptotically.
This is a direct consequence of the fact that finite, constant scalings do not affect the asymptotic statements, i.e. we can use the same $\lambda$ for any level of variance without influencing the asymptotic behavior.
\end{enumerate}
\end{Remark}

The following facts (Lemma 1-3) will be needed for subsequent analysis.
Since their proofs are standard, we state them without proofs here.
Using the notations as introduced before, one has that
\begin{Lemma}
$\xv^{ls} = \xv^0 + \VM\Sigma^{-1}\UM^{T}\vv$.
\end{Lemma}

\begin{Lemma}
$\mathbf{b} = \Sigma \VM^{T}\xv^{ls} - \Sigma \VM^{T}\xv^0$ is a Gaussian random vector with distribution $\mathcal{N}(0, I)$.
\end{Lemma}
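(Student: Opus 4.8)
The plan is to collapse $\mathbf{b}$ to a plain orthogonal image of the noise and then invoke the standard fact that a linear transformation of a Gaussian vector is again Gaussian. First I would substitute the expression for $\xv^{ls}$ supplied by Lemma 1. Writing $\xv^{ls}-\xv^0 = \VM\Sigma^{-1}\UM^{T}\vv$ and inserting it into the definition of $\mathbf{b}$ gives
\begin{equation}
\mathbf{b} = \Sigma\VM^{T}\left(\xv^{ls}-\xv^0\right) = \Sigma\VM^{T}\VM\Sigma^{-1}\UM^{T}\vv .
\end{equation}
Using the SVD orthogonality relation $\VM^{T}\VM = I_n$ from Eq. (\ref{eq.svd}), and then $\Sigma\Sigma^{-1}=I_n$, the matrix factors telescope and I obtain the clean identity $\mathbf{b}=\UM^{T}\vv$. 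Here $\Sigma^{-1}$ is well defined because the sufficiently rich assumption in Eq. (\ref{eq.sing}) forces $\sigma_1(\AM)\ge c_1\sqrt{N}>0$.

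The second step is purely distributional. Since $\vv\sim\mathcal{N}(0,I_N)$ under the normalization $c=1$ adopted at the start of this section, and $\mathbf{b}=\UM^{T}\vv$ is a fixed linear transformation of $\vv$, the vector $\mathbf{b}$ is Gaussian. It then remains only to read off its first two moments. The mean is $\E[\mathbf{b}]=\UM^{T}\E[\vv]=0$, and the covariance is
\begin{equation}
\E\left[\mathbf{b}\,\mathbf{b}^{T}\right] = \UM^{T}\,\E\left[\vv\,\vv^{T}\right]\,\UM = \UM^{T}\UM = I_n,
\end{equation}
where the last equality is again an SVD orthogonality relation. This yields $\mathbf{b}\sim\mathcal{N}(0,I_n)$.

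I do not anticipate a genuine obstacle here: the claim is essentially a one-line consequence of Lemma 1 together with the two orthogonality identities $\VM^{T}\VM=I_n$ and $\UM^{T}\UM=I_n$ built into the thin SVD. The only points worth flagging explicitly are that $\Sigma$ is invertible (guaranteed by Eq. (\ref{eq.sing})) and that $\UM$ has orthonormal columns but is not square, so that $\UM^{T}\UM=I_n$ holds while $\UM\UM^{T}$ need not equal $I_N$; it is precisely the former identity that collapses the covariance to $I_n$.
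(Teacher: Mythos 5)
Your proof is correct: the paper states this lemma without proof (calling it standard), and your argument --- reducing $\mathbf{b}$ to $\UM^{T}\vv$ via Lemma 1 and the identities $\VM^{T}\VM=I_n$, $\UM^{T}\UM=I_n$, then reading off the mean and covariance --- is exactly the standard derivation the authors implicitly rely on. Your remark distinguishing $\UM^{T}\UM=I_n$ from $\UM\UM^{T}\neq I_N$ correctly identifies the one point where care is needed with the thin SVD.
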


\begin{Lemma}
Given $d>0$, then \[ \int_{|t|>d}\frac{1}{\sqrt{2\pi}}e^{-\frac{t^2}{2}}dt \le e^{-\frac{d^2}{2}}. \]
\end{Lemma}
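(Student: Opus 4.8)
The plan is to prove the stated Gaussian tail bound by an elementary change of variables that completes the square in the exponent. First I would exploit the symmetry of the integrand about the origin to reduce the two-sided tail to a one-sided one, writing
\[
\int_{|t|>d}\frac{1}{\sqrt{2\pi}}e^{-\frac{t^2}{2}}dt = \frac{2}{\sqrt{2\pi}}\int_{d}^{\infty}e^{-\frac{t^2}{2}}dt,
\]
so that it suffices to show the right-hand side is at most $e^{-d^2/2}$.

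Next I would substitute $t = d + s$ and use $\frac{(d+s)^2}{2} = \frac{d^2}{2} + ds + \frac{s^2}{2}$ to factor the tail as
\[
\int_{d}^{\infty}e^{-\frac{t^2}{2}}dt = e^{-\frac{d^2}{2}}\int_{0}^{\infty}e^{-ds}\,e^{-\frac{s^2}{2}}ds.
\]
The key step is then to observe that, since $d>0$ and $s\ge 0$, one has $e^{-ds}\le 1$ over the whole range of integration, so the remaining integral is bounded by the half-line Gaussian integral $\int_{0}^{\infty}e^{-s^2/2}ds = \sqrt{\pi/2}$. Substituting back and using the identity $\frac{2}{\sqrt{2\pi}}\sqrt{\pi/2}=1$ then yields the claimed inequality for every $d>0$.

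I expect the only real subtlety to be getting the constant exactly right: a crude Chernoff-type estimate would bound the two-sided tail by $2\,e^{-d^2/2}$, overshooting the target by a factor of two, so the argument must exploit the structure of the integrand rather than a generic moment-generating-function bound. The completing-the-square step threads this precisely, because the factor $\frac{2}{\sqrt{2\pi}}$ coming from the two-sided normalization cancels exactly against $\sqrt{\pi/2}$. As an alternative I could set $f(d) = e^{-d^2/2} - \frac{2}{\sqrt{2\pi}}\int_d^\infty e^{-t^2/2}dt$, verify that $f(0)=0$ and $f(d)\to 0$ as $d\to\infty$, and check that $f'(d) = e^{-d^2/2}\bigl(\sqrt{2/\pi}-d\bigr)$ changes sign only once, from positive to negative, so that monotonicity forces $f\ge 0$ throughout. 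Both routes are short, and I would present the substitution argument as the cleaner of the two.
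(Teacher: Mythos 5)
Your proof is correct: the symmetry reduction, the substitution $t=d+s$ with the expansion $\frac{(d+s)^2}{2}=\frac{d^2}{2}+ds+\frac{s^2}{2}$, the bound $e^{-ds}\le 1$, and the exact cancellation $\frac{2}{\sqrt{2\pi}}\cdot\sqrt{\pi/2}=1$ all check out, and the monotonicity alternative via $f'(d)=e^{-d^2/2}\bigl(\sqrt{2/\pi}-d\bigr)$ is also sound. The paper states this lemma without proof as a standard fact, so there is nothing to compare against; your completing-the-square argument is precisely the standard derivation that obtains the constant $1$ rather than the factor $2$ a naive Chernoff bound would give.
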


In the following, we will first analyze the probability that $\xv^0$ lies in the constraints set of the LP problem given by Eq. (\ref{eq.lp}).
Then we give an error estimation of the results given by Eq. (\ref{eq.lp}).
After this, we will discuss the capability of recovering the support set of $\xv^{0}$ by Eq. (\ref{eq.lp}),
which will lead to the asymptotic ORACLE property of the proposed estimator.
\begin{Lemma}
For all $\lambda > 0$, one has that
\[\PP\left(\|\VM^{T}\xv^{ls} - \VM^{T}\xv^0\|_{\infty}> \frac{\lambda}{\sqrt{n}}\right) \le ne^{-\frac{\lambda^2\sigma^2}{2n}}.\]
\label{lm.prob}
\end{Lemma}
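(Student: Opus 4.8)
The plan is to reduce the event on the $\infty$-norm to $n$ one-dimensional Gaussian tail events and then combine them with a union bound. First I would introduce the rotated error $\zv := \VM^{T}(\xv^{ls}-\xv^0)$, so that the quantity to be controlled is exactly $\|\zv\|_{\infty} = \max_{1\le i\le n}|z_i|$. Lemma 2 states that $\mathbf{b} = \Sigma\VM^{T}\xv^{ls}-\Sigma\VM^{T}\xv^0 = \Sigma\zv$ is distributed as $\mathcal{N}(0,I)$; hence, writing this relation coordinatewise, $z_i = b_i/\sigma_i(\AM)$, where the $b_i$ are independent standard normals.

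The key step is then to control each coordinate uniformly. Since $\sigma$ denotes the smallest singular value of $\AM$, we have $\sigma_i(\AM)\ge\sigma$ for every $i$, and therefore $|z_i| = |b_i|/\sigma_i(\AM)\le |b_i|/\sigma$. Consequently the coordinatewise event is contained in a standard-normal tail event:
\[
  \PP\!\left(|z_i| > \frac{\lambda}{\sqrt{n}}\right)
  \le \PP\!\left(|b_i| > \frac{\lambda\sigma}{\sqrt{n}}\right).
\]
Applying Lemma 3 with $d = \lambda\sigma/\sqrt{n}$ bounds the right-hand side by $e^{-\lambda^2\sigma^2/(2n)}$.

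Finally I would assemble the pieces with a union bound. Because $\{\|\zv\|_{\infty} > \lambda/\sqrt{n}\} = \bigcup_{i=1}^{n}\{|z_i| > \lambda/\sqrt{n}\}$, subadditivity of probability gives
\[
  \PP\!\left(\|\VM^{T}\xv^{ls}-\VM^{T}\xv^0\|_{\infty} > \frac{\lambda}{\sqrt{n}}\right)
  \le \sum_{i=1}^{n}\PP\!\left(|z_i|>\frac{\lambda}{\sqrt{n}}\right)
  \le n\,e^{-\frac{\lambda^2\sigma^2}{2n}},
\]
which is the claimed inequality.

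I do not expect a genuine obstacle here: the argument is essentially mechanical once Lemma 2 diagonalizes the error into independent standard Gaussians. The only point that requires a little care is the \emph{direction} of the singular-value inequality — one must use the \emph{smallest} singular value $\sigma$ to bound $|z_i|$ from above uniformly across coordinates, since the larger $\sigma_i(\AM)$ only shrink $|z_i|$. A minor cosmetic remark is that the union bound is loose (it discards the independence of the $b_i$), but the resulting estimate is already in the stated form and is precisely what the support-recovery argument in the sequel consumes, so no sharper tail is needed.
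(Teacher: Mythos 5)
Your proposal is correct and follows essentially the same route as the paper: both reduce the event to the standardized vector $\mathbf{b}=\Sigma\VM^{T}(\xv^{ls}-\xv^0)$ via the smallest singular value $\sigma$, apply a union bound over the $n$ coordinates, and invoke the Gaussian tail bound of Lemma 3 with $d=\lambda\sigma/\sqrt{n}$. The only (immaterial) difference is that you perform the singular-value comparison coordinatewise before the union bound, whereas the paper compares the $\infty$-norms first and then takes the union bound.
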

\begin{proof}
By Lemma 2, and noticing that $\mathbf{b} = \Sigma \VM^{T}\xv^{ls} - \Sigma \VM^{T}\xv^0 $ is a Gaussian random vector with distribution $\mathcal{N}(0, I)$, we have that
\begin{multline}
 \PP\left(\|\VM^{T}\xv^{ls} - \VM^{T}\xv^0\|_{\infty}> \frac{\lambda}{\sqrt{n}}\right)  \\
\le
\PP\left(\|\Sigma \VM^{T}\xv^{ls} - \Sigma \VM^{T}\xv^0\|_{\infty}> \frac{\lambda \sigma}{\sqrt{n}}\right)  \\ \nonumber
=  \PP\left(\|\mathbf{b}\|_{\infty}> \frac{\lambda \sigma}{\sqrt{n}}\right)\\ \nonumber
= \PP\left(\exists i, such\ \ that\ \ |b_i|> \frac{\lambda \sigma}{\sqrt{n}}\right)\\ \nonumber
\le \sum_{i=1}^{i=n}\PP\left({|b_i|> \frac{\lambda \sigma}{\sqrt{n}}}\right).\nonumber
\end{multline}
Application of Lemma 3 gives the desired result.
\end{proof}

\begin{Lemma}
For all $\lambda > 0$, if $\|\VM^{T}\xv^{ls} - \VM^{T}\xv^0\|_{\infty} \le \frac{\lambda}{\sqrt{n}}$, then $\|\xv^{ls} - \xv^0\|_{\infty} \le \lambda$.
\end{Lemma}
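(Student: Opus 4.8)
The plan is to exploit the orthogonal invariance of the Euclidean norm together with the elementary equivalences between the $\ell_\infty$ and $\ell_2$ norms on $\R^n$. Write $\wv = \xv^{ls} - \xv^0 \in \R^n$, so the hypothesis reads $\|\VM^{T}\wv\|_{\infty} \le \lambda/\sqrt{n}$ and the goal is to establish $\|\wv\|_{\infty} \le \lambda$.

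First I would recall the two standard inequalities valid for any $\zv \in \R^n$, namely $\|\zv\|_{\infty} \le \|\zv\|_{2}$ and $\|\zv\|_{2} \le \sqrt{n}\,\|\zv\|_{\infty}$. Applying the first one with $\zv = \wv$ gives $\|\wv\|_{\infty} \le \|\wv\|_{2}$.

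Next, the key step is to observe that $\VM$ is a square orthogonal matrix: since $\VM \in \R^{n\times n}$ satisfies $\VM^{T}\VM = I_n$, it preserves the Euclidean norm, whence $\|\wv\|_{2} = \|\VM^{T}\wv\|_{2}$. Applying the second standard inequality with $\zv = \VM^{T}\wv$ then yields $\|\VM^{T}\wv\|_{2} \le \sqrt{n}\,\|\VM^{T}\wv\|_{\infty}$.

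Chaining these three facts and invoking the hypothesis gives $\|\wv\|_{\infty} \le \|\wv\|_{2} = \|\VM^{T}\wv\|_{2} \le \sqrt{n}\,\|\VM^{T}\wv\|_{\infty} \le \sqrt{n}\cdot \frac{\lambda}{\sqrt{n}} = \lambda$, which is exactly the claim. There is essentially no obstacle here; the only point worth a moment's care is the use of the orthogonal invariance of the $\ell_2$ norm, which relies on $\VM$ being square, so that $\VM^{T}\VM = I_n$ genuinely makes $\VM$ orthogonal rather than merely furnishing orthonormal columns of a tall matrix.
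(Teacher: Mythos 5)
Your proof is correct and follows essentially the same route as the paper: both arguments reduce to the chain $\|\xv^{ls}-\xv^0\|_\infty \le \|\VM^{T}(\xv^{ls}-\xv^0)\|_2 \le \sqrt{n}\,\|\VM^{T}(\xv^{ls}-\xv^0)\|_\infty$, the paper obtaining the first step by row-wise Cauchy--Schwarz with unit-norm rows of $\VM$ and you by orthogonal invariance of the $\ell_2$ norm, which is the same fact in different clothing. Your explicit remark that $\VM$ must be square for $\VM^{T}\VM=I_n$ to give genuine orthogonality is a worthwhile point of care that the paper leaves implicit.
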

\begin{proof}
Define  $\mathbf{c}$ as  $\mathbf{c} = \VM^{T}\xv^{ls} - \VM^{T}\xv^0 $, so we have $\|\xv^{ls} - \xv^0\|_{\infty} = \|V\mathbf{c}\|_{\infty}$.
Analyze the $i$th element of $\VM \mathbf{c}$ that
\begin{align}\nonumber
 |{\VM}_i\mathbf{c}| \le \|\mathbf{c}\|_2
               \le \|\mathbf{c}\|_{\infty}\sqrt{n} \le \lambda. \nonumber
\end{align}
The first inequality is by definition, the second inequality comes from the Cauchy inequality, the last inequality is due to the assumption of the lemma.
\end{proof}

Combining the previous two lemmas gives
\begin{Lemma}
	$\PP(\|\xv^{ls} - \xv^0\|_{\infty} \le \lambda) \ge 1 - ne^{-\frac{\lambda^2\sigma^2}{2n}}$.
\end{Lemma}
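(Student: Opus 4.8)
The plan is to derive this as an immediate corollary of Lemma 5 and Lemma~\ref{lm.prob}, with no new probabilistic input; all the work is at the level of set inclusions and a passage to the complement. The substance of the bound already lives in Lemma~\ref{lm.prob} (the tail estimate) and Lemma 5 (the deterministic implication relating the rotated and the original infinity norms), so the task is only to glue the two together correctly.

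First I would recast Lemma 5 as a statement about events. Reading its implication pointwise in the sample space, the ``good'' event for the rotated vector is contained in the ``good'' event for the original vector; that is, setting
\[
  A = \left\{ \|\VM^{T}\xv^{ls} - \VM^{T}\xv^0\|_{\infty} \le \tfrac{\lambda}{\sqrt{n}} \right\}, \qquad
  B = \left\{ \|\xv^{ls} - \xv^0\|_{\infty} \le \lambda \right\},
\]
Lemma 5 gives $A \subseteq B$, hence by monotonicity of probability $\PP(B) \ge \PP(A)$.

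Next I would bound $\PP(A)$ from below by passing to the complement and invoking Lemma~\ref{lm.prob}:
\[
  \PP(A) = 1 - \PP\left( \|\VM^{T}\xv^{ls} - \VM^{T}\xv^0\|_{\infty} > \tfrac{\lambda}{\sqrt{n}} \right) \ge 1 - n e^{-\frac{\lambda^2\sigma^2}{2n}}.
\]
Chaining this with $\PP(B) \ge \PP(A)$ yields $\PP\left(\|\xv^{ls} - \xv^0\|_{\infty} \le \lambda\right) \ge 1 - n e^{-\frac{\lambda^2\sigma^2}{2n}}$, which is the claim.

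Since both ingredients are already established, I do not expect a genuine obstacle here; the only points requiring care are bookkeeping ones. One must read Lemma 5 in the correct direction, so that the inclusion is $A \subseteq B$ (the event for the rotated vector sits inside the event for the original vector) rather than the reverse, and one must track the sign correctly when the upper tail bound of Lemma~\ref{lm.prob} is converted into a lower bound on $\PP(A)$ via the complement.
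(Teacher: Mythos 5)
Your proposal is correct and follows essentially the same route as the paper: the paper's proof likewise chains $\PP(\|\xv^{ls}-\xv^0\|_\infty\le\lambda)\ge\PP(\|\VM^T\xv^{ls}-\VM^T\xv^0\|_\infty\le\lambda/\sqrt{n})$ via Lemma 5 and then applies the complement and Lemma 4. Your explicit event-inclusion phrasing $A\subseteq B$ is just a more careful writing of the same two steps.
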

\begin{proof}
The proof goes as follows
\begin{multline}
\PP\left(\|\xv^{ls} - \xv^0\|_{\infty} \le \lambda\right) \\
\ge \PP\left(\|\VM^{T}\xv^{ls} - \VM^{T}\xv^0\|_{\infty}\le \frac{\lambda}{\sqrt{n}}\right) \\
= 1 - \PP\left(\|\VM^{T}\xv^{ls} - \VM^{T}\xv^0\|_{\infty} > \frac{\lambda}{\sqrt{n}}\right) \\
\ge 1 - ne^{-\frac{\lambda^2\sigma^2}{2n}} \nonumber
\end{multline}
The first inequality comes from Lemma 5, and the second inequality follows from Lemma 4.
\end{proof}

The above lemma tells us that $\xv^0$ will lie inside the feasible set of the LP problem as given in Eq. ($\ref{eq.lp}$) with high probability.
By a proper choice of $\lambda$, the following result is concluded.
\begin{Theorem}
	 Given $0<\epsilon < 1$, and let $\lambda^2 = \frac{2n}{N^{1-\epsilon}}$, we have that
 $$\PP\left(\|\xv^{ls} - \xv^0\|_{\infty} \le \lambda\right) \ge 1 - ne^{-c_1^2 N^{\epsilon}}.$$
\end{Theorem}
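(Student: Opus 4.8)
The plan is to obtain the stated bound by a direct substitution into the conclusion of the preceding lemma (Lemma 6), combined with the lower bound on the smallest singular value furnished by the ``sufficiently rich'' assumption. First I would invoke Lemma 6, which for any $\lambda > 0$ gives
$$\PP\left(\|\xv^{ls} - \xv^0\|_{\infty} \le \lambda\right) \ge 1 - ne^{-\frac{\lambda^2\sigma^2}{2n}},$$
so the whole task reduces to controlling the exponent $\frac{\lambda^2\sigma^2}{2n}$ from below.

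Next I would substitute the prescribed choice $\lambda^2 = \frac{2n}{N^{1-\epsilon}}$. The factor $2n$ cancels, leaving the exponent equal to $\frac{\sigma^2}{N^{1-\epsilon}}$. This is the key simplification: the dimension $n$ disappears from the exponent, and only the smallest singular value $\sigma$ and the scaling in $N$ remain.

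The third step is to lower-bound $\sigma^2$. Since $\sigma$ denotes the smallest singular value $\sigma_1(\AM)$, the sufficiently rich assumption Eq. (\ref{eq.sing}) guarantees that, for all $N > N_0$, one has $\sigma = \sigma_1(\AM) \ge c_1\sqrt{N}$, hence $\sigma^2 \ge c_1^2 N$. Substituting this into the exponent gives $\frac{\sigma^2}{N^{1-\epsilon}} \ge \frac{c_1^2 N}{N^{1-\epsilon}} = c_1^2 N^{\epsilon}$. Because $t \mapsto e^{-t}$ is decreasing, this yields $ne^{-\frac{\lambda^2\sigma^2}{2n}} \le ne^{-c_1^2 N^{\epsilon}}$, and therefore $\PP(\|\xv^{ls}-\xv^0\|_\infty \le \lambda) \ge 1 - ne^{-c_1^2 N^\epsilon}$, which is the claim.

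There is no genuinely hard step here: the argument is essentially a one-line substitution once Lemma 6 is in hand. The only points requiring care are bookkeeping ones, namely making sure the cancellation of $2n$ is exact, correctly identifying $\sigma$ with the smallest singular value $\sigma_1(\AM)$ so that the assumed lower bound (rather than an upper bound) is the relevant one, and noting that the estimate holds for $N > N_0$ in keeping with the asymptotic nature of the assumption. I would also remark that only the lower envelope $c_1\sqrt{N}$ of the singular values enters, so the upper bound $c_2\sqrt{N}$ plays no role in this particular statement.
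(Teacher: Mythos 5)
Your proof is correct and is exactly the argument the paper intends: the theorem is stated immediately after Lemma 6 as a direct consequence of substituting $\lambda^2 = \frac{2n}{N^{1-\epsilon}}$ and the lower bound $\sigma \ge c_1\sqrt{N}$ from the sufficiently-rich assumption into the exponent. Your bookkeeping remarks (the cancellation of $2n$, the identification of $\sigma$ with the smallest singular value, and the restriction to $N > N_0$) are all accurate and match the paper's setup.
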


Next, we will derive an error bound (in the $l_2$- norm) of the estimator given by the LP formulation.
Define
 $$\hv = \xv^{lp}-\xv^0,$$
 as the error vector of LP formulation.
 We have that  the error term $\hv$ is bounded as follows:
  \begin{Lemma}
 	For any $\lambda>0$,
	if $\|\xv^{ls} - \xv^0\|_{\infty} \le \lambda$,
 	then we have that $\| \hv \|_2^2\le 4s\lambda^2$.
 \end{Lemma}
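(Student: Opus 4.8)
The plan is to exploit the explicit soft-thresholding form of $\xv^{lp}$ together with the sparsity of $\xv^0$, and to show that the hypothesis forces the error $\hv$ to be supported entirely on $\mathcal{T}$; once that is established, the bound follows from a trivial dimension count. The key observation, which makes the factor $s$ (rather than $n$) appear, is that feasibility of $\xv^0$ annihilates every off-support coordinate of the soft-thresholded estimate.

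First I would argue that $h_i = 0$ for every $i \in \mathcal{T}^{c}$. For such an index $x^0_i = 0$, so the hypothesis $\|\xv^{ls} - \xv^0\|_{\infty} \le \lambda$ gives $|x^{ls}_i| = |x^{ls}_i - x^0_i| \le \lambda$. The closed-form solution of Eq.~(\ref{eq.lp}) sets $x^{lp}_i = 0$ precisely when $|x^{ls}_i| \le \lambda$, hence $x^{lp}_i = 0 = x^0_i$ and $h_i = 0$. Consequently $\hv$ has at most $s$ nonzero entries, all lying in $\mathcal{T}$.

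Next I would bound the surviving entries. For $i \in \mathcal{T}$, feasibility of $\xv^{lp}$ for Eq.~(\ref{eq.lp}) gives $|x^{lp}_i - x^{ls}_i| \le \lambda$, and combining this with the hypothesis through the triangle inequality yields $|h_i| = |x^{lp}_i - x^0_i| \le |x^{lp}_i - x^{ls}_i| + |x^{ls}_i - x^0_i| \le 2\lambda$. Collecting the two observations then produces $\|\hv\|_2^2 = \sum_{i \in \mathcal{T}} h_i^2 \le s\,(2\lambda)^2 = 4s\lambda^2$, which is the claimed estimate.

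The crux is the first step: recognizing that the feasibility hypothesis makes the soft-thresholding operator kill every coordinate outside $\mathcal{T}$, so that only $s$ terms contribute to $\|\hv\|_2^2$. A more generic route through the $\ell_1$-optimality of $\xv^{lp}$, which yields the cone condition $\|\hv_{\mathcal{T}^{c}}\|_1 \le \|\hv_{\mathcal{T}}\|_1$, together with $\|\hv\|_{\infty} \le 2\lambda$, would also control $\|\hv\|_2$, but only up to a larger constant; it is the explicit thresholding structure that delivers the sharp $4s\lambda^2$ rather than a bound of order $16s\lambda^2$.
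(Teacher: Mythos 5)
Your proof is correct and follows essentially the same route as the paper's: both use the hypothesis plus the soft-thresholding form of $\xv^{lp}$ to force $\hv_{\mathcal{T}^{c}}=\mathbf{0}$, then bound $\|\hv_{\mathcal{T}}\|_{\infty}\le 2\lambda$ by the triangle inequality and count the $s$ surviving coordinates. No substantive difference to report.
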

 \begin{proof}
	We first consider the error vector on ${\mathcal{T}}^{c}$ which is given by $\hv_{{\mathcal{T}}^{c}}$.
	Since $\|\xv^{ls} - \xv^0\|_{\infty} \le \lambda$ and $\xv^0_{{\mathcal{T}}^{c}} = \mathbf{0}$,  we have that $\|\xv^{ls}_{{\mathcal{T}}^{c}}\|_{\infty} \le \lambda$.
	It follows from the previous discussions that $\xv^{lp}$ is obtained by application of the soft-shresholding operator with the threshold $\lambda$, 	 applied  componentwise to $\xv^{ls}$, hence we obtain that $\xv_{{\mathcal{T}}^{c}}^{lp} = \mathbf{0}$.
	This implies that $\hv_{{\mathcal{T}}^{c}} = \mathbf{0}$.

	Next we consider the error vector on the support $\mathcal{T}$, denoted as $\hv_{\mathcal{T}}$.
	From the property of the soft-thresholding operation, it follows that
	$\|\xv_{\mathcal{T}}^{ls} - \xv_{\mathcal{T}}^{lp}\|_{\infty} \le \lambda.$
	Then we have that $\|\xv_{\mathcal{T}}^{0} - \xv_{\mathcal{T}}^{lp}\|_{\infty} \le \|\xv_{\mathcal{T}}^{ls} - \xv_{\mathcal{T}}^{lp}\|_{\infty} + \|\xv_{\mathcal{T}}^{ls} - \xv_{\mathcal{T}}^{0}\|_{\infty} \le 2\lambda$

	Combining both statements gives that $\|\hv\|_2^2 = \|\hv_{\mathcal{T}}\|_2^2 + \|\hv_{{\mathcal{T}}^{c}}\|_2^2 \le |T|\|\hv_{\mathcal{T}}\|_{\infty}^2 \le 4s\lambda^2 $.
\end{proof}

Plugging in the $\lambda$ as chosen in previous section, we can get the error bound of the LP formulation.
 However, the estimate $\xv^{lp}$ is not the final estimation, instead it will be used to recover the support set of $\xv^{0}$.
The following theorem states this result formally.
For notational convenience, ${\mathcal{T}}^{lp}(N)$ is used to denote the recovered support from the LP formulation,
and $\xv^{rels}(N)$ then denotes the estimate after the second LSE step using $N$ observations.
Finally, the vector $\xv^{ls-or}(N)$ denotes the LSE as if the support of $\xv^0$ were known (i.e. the ORACLE presents) using $N$ observations.

We will first get a weak support recovery result and based on this, we further prove that the support as recovered by the LP formulation will converge to the true support $\mathcal{T}$ with probability 1 when $N$ goes to infinity.
\begin{Lemma}
	Given $0<\epsilon < 1$, and assume that the matrix $\AM$ has singular values which satisfies Eq. (\ref{eq.sing}), with constants $c_1, c_2$ as given there. Let $x_0  \triangleq  \min \{ |x^0_i|, i \in \mathcal{T} \}\in \R^{+}$, and $\lambda^2 = \frac{2n}{N^{1-\epsilon}}$, then
	\begin{equation}
	    \lim_{N\rightarrow\infty}\PP({\mathcal{T}} = {\mathcal{T}}^{lp}(N)) = 1. \nonumber
	\end{equation}
\end{Lemma}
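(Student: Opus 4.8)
The plan is to reduce the event $\{\mathcal{T}^{lp}(N) = \mathcal{T}\}$ to the single ``good'' event $E_N \triangleq \{\|\xv^{ls} - \xv^0\|_\infty \le \lambda\}$, whose probability is already controlled by Theorem 1, and to exploit the explicit soft-thresholding form of $\xv^{lp}$. From the analytic solution of the LP step, an index $i$ belongs to $\mathcal{T}^{lp}(N)$ precisely when $|x^{ls}_i| > \lambda$; hence the whole argument is a componentwise threshold-crossing comparison. It therefore suffices to show that $E_N$ forces the two inclusions $\mathcal{T}^{lp}(N) \subseteq \mathcal{T}$ and $\mathcal{T} \subseteq \mathcal{T}^{lp}(N)$ once $N$ is large enough.

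First I would establish the inclusion $\mathcal{T}^{lp}(N) \subseteq \mathcal{T}$ (no false positives): on $E_N$, for any $i \in \mathcal{T}^c$ we have $x^0_i = 0$, so $|x^{ls}_i| = |x^{ls}_i - x^0_i| \le \lambda$, whence the soft-thresholding sets $x^{lp}_i = 0$ and $i \notin \mathcal{T}^{lp}(N)$. For the reverse inclusion $\mathcal{T} \subseteq \mathcal{T}^{lp}(N)$ (no false negatives): on $E_N$, for any $i \in \mathcal{T}$ the reverse triangle inequality together with $|x^0_i| \ge x_0$ gives $|x^{ls}_i| \ge |x^0_i| - |x^{ls}_i - x^0_i| \ge x_0 - \lambda$.

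Second I would observe that $\lambda^2 = 2n/N^{1-\epsilon} \to 0$ as $N \to \infty$ because $\epsilon < 1$; hence there is a finite $N_1$ with $\lambda < x_0/2$ for all $N > N_1$. For such $N$ we get $x_0 - \lambda > \lambda$, so $|x^{ls}_i| > \lambda$ and $i \in \mathcal{T}^{lp}(N)$. Combining the two inclusions, on $E_N$ and for $N > N_1$ we have exactly $\mathcal{T}^{lp}(N) = \mathcal{T}$, so that $\PP(\mathcal{T}^{lp}(N) = \mathcal{T}) \ge \PP(E_N)$.

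Finally I would invoke Theorem 1, which yields $\PP(E_N) \ge 1 - ne^{-c_1^2 N^\epsilon}$. Since $\epsilon > 0$, the exponent $N^\epsilon \to \infty$, so $ne^{-c_1^2 N^\epsilon} \to 0$ and therefore $\PP(\mathcal{T}^{lp}(N) = \mathcal{T}) \to 1$. The argument is not technically deep; the only genuine subtlety — and the reason the admissible range is exactly $0 < \epsilon < 1$ — is that the two error directions pull against each other: eliminating false negatives requires $\lambda$ to vanish (forcing $\epsilon < 1$), while eliminating false positives with overwhelming probability requires the tail bound $ne^{-c_1^2 N^\epsilon}$ to vanish (forcing $\epsilon > 0$). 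Reconciling both simultaneously, rather than either one in isolation, is the crux of the statement.
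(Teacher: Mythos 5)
Your proof is correct, but it takes a more modular route than the paper's. You condition on the single event $E_N=\{\|\xv^{ls}-\xv^0\|_\infty\le\lambda\}$, show deterministically that $E_N$ together with $\lambda<x_0/2$ forces $\mathcal{T}^{lp}(N)=\mathcal{T}$ (no false positives because $|x^{ls}_i|\le\lambda$ off the support, no false negatives because $|x^{ls}_i|\ge x_0-\lambda>\lambda$ on it), and then import the probability bound $\PP(E_N)\ge 1-ne^{-c_1^2N^\epsilon}$ wholesale from Theorem 1. The paper instead never invokes Theorem 1 here: it writes $\xv^{ls}=\xv^0+\bar\vv$ with $\bar\vv\sim\mathcal{N}(0,\VM\Sigma^{-2}\VM^T)$ and runs a fresh componentwise union bound, estimating $\PP(-\lambda-x_i^0<\bar v_i<\lambda-x_i^0)$ by the interval width times the maximal density and the off-support tails by Lemma 3, using the row-normalization of $\VM$ and the singular-value assumption to control each component's variance by $1/(c_1^2N)$. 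The payoff of the paper's heavier computation is a sharper split of the failure probability, $CN^{\epsilon/2}e^{-\frac{1}{8}(c_1x_0)^2N}+Ne^{-c_1^2 nN^{\epsilon}}$, showing the false-negative error decays like $e^{-\Theta(N)}$ rather than $e^{-\Theta(N^\epsilon)}$; your single bound $ne^{-c_1^2N^\epsilon}$ lumps both error types together. For the limit statement this makes no difference, and your bound is even summable in $N$, so it would also carry the subsequent Borel--Cantelli argument of Theorem 2. Your closing remark about the roles of $\epsilon<1$ and $\epsilon>0$ correctly identifies the tension the choice of $\lambda$ resolves.
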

\begin{proof}
	Let the vector $\bar\vv$ denote $\bar{\vv} = \VM\Sigma^{-1}\UM^{T}\vv$.
 	Since $\xv^{ls} = \xv^{0} + \VM\Sigma^{-1}\UM^{T}\vv$, one has that $\xv^{ls} = \xv^{0} + \bar{\vv}$, in which $\bar{\vv}$ follows a normal distribution $\mathcal{N}(0,\VM\Sigma^{-2}\VM^{T})$.
	Without loss of generality, assume that $x_1^{0},x_2^{0},\dots,x_s^{0}$ are the nonzero elements of $\xv^{0}$ and their values are positive.
    Since $\lambda$ decreases when $N$ increases, so there exist a number $N_1 \in \N$, such that $\lambda < \frac{x_0}{2}$ for all $N\ge N_1$. In the following derivations, we use $v_{i,j}$ to denote the element in the $i$th row, $j$th column of $\VM$ and $\bar{v}_i$ denotes the $i$th element of $\bar{\vv}$.
    When $N > N_1$, we have the following bound of $\PP(\mathcal{T}\ne {\mathcal{T}}^{lp}(N))$:
	 \begin{align}\nonumber
 	&\PP\left(\mathcal{T} \ne {\mathcal{T}}^{lp}(N)\right)\\ \nonumber
	&= \PP \left(|x_1^{0}+\bar{v}_1|<\lambda, or \ \ |x_2^{0}+\bar{v}_2|<\lambda, \dots, or \ \ |x_s^{0}+\bar{v}_s|<\lambda; \right . \\ \nonumber
 	& \ \ \ \ \ \ \ \ \  \ \ \ \ \ \ \ \left. or \ \ |\bar{v}_{s+1}|>\lambda, or \ \ |\bar{v}_{s+2}|>\lambda,  \dots ,or \ \ |\bar{v}_{N}|>\lambda \right)\\ \nonumber
 	&\le \sum_{i=1}^{s}\PP (-\lambda-x_i^0<\bar{v}_i<\lambda-x_i^0) + \sum_{i=s+1}^{N} \PP (|\bar{v}_i|>\lambda) \\ \nonumber
 	&\le \sum_{i=1}^{s}\frac{2\lambda e^{-(2\sum_{j=1}^n\sigma_j^{-2}v_{ij}^2)^{-1}(-x_i^0+\lambda)^2}}{\sqrt{2\pi(\sum_{j=1}^n\sigma_j^{-2}v_{ij}^2)}}  \nonumber \\
	& \hspace{+4cm}  + \sum_{i=s+1}^{N}e^{-(2\sum_{j=1}^n\sigma_j^{-2}v_{ij}^2)^{-1}\lambda^2} \nonumber \\
 	&\le \sum_{i=1}^{s}\frac{2c_2\sqrt{N}\lambda}{\sqrt{2\pi}} e^{-\frac{1}{2}c_1^2N(-x_i^0+\lambda)^2} + \nonumber \sum_{i=s+1}^{N}e^{-\frac{1}{2}c_1^2N\lambda^2}\\ \nonumber
 	&\le 2c_2s\sqrt{n}N^{\frac{\epsilon}{2}} e^{-\frac{1}{8}(c_1x_0)^2 N} + N e^{-c_1^2nN^{\epsilon}} \nonumber \\
   & =  C N^{\frac{\epsilon}{2}} e^{-\frac{1}{8}(c_1x_0)^2 N} + N e^{-c_1^2nN^{\epsilon}},
   \label{eq.pbconst}
	\end{align}
	where $C = 2c_2s\sqrt{n}$. The second inequality in the chain holds due to the fact that the probability distribution function of $\bar{v}_i$ is monotonically increasing in the interval $[-\lambda-x_i^0,\lambda-x_i^0]$, together with results in Lemma 3.

Then we can see that both terms in (\ref{eq.pbconst}) will tend to 0 as $N \rightarrow \infty$ for any fixed $\epsilon >0$, i.e. $\lim_{N\rightarrow\infty} \PP({\mathcal{T}}^{lp}(N) = {\mathcal{T}})=1$.
\end{proof}

\begin{Remark}
	Notice the fact that
	\begin{equation}
    		\PP\left(\xv^{rels}(N) = \xv^{ls-or}(N)\right)
    		\ge \PP\left({\mathcal{T}}^{lp}(N) = {\mathcal{T}}\right), \nonumber
	\end{equation}
	and from the previous Lemma, we know that the right hand side will tend to 1 as $N$ tends to infinity, so it also holds that
	\begin{equation}
	   \lim_{N\rightarrow\infty} \PP(\xv^{rels}(N) = \xv^{ls-or}(N)) = 1. \nonumber
	\end{equation}
\end{Remark}

Based on the previous lemma, we have
\begin{Theorem}
	Given $0<\epsilon < 1$, and assume that the matrix $\AM$ has singular values which satisfies Eq. (\ref{eq.sing}), with constants $c_1, c_2$ as given there. Let $x_0  \triangleq  \min \{ |x^0_i|, i \in \mathcal{T} \}\in \R^{+}$, and $\lambda^2 = \frac{2n}{N^{1-\epsilon}}$, then it holds that
	\begin{equation}
		\PP\left(\exists N' \text{such that\,} \cap_{N= N'}^{\infty} \{ \mathcal{T}^{lp}(N) = \mathcal{T} \} \right)= 1. \nonumber
	\end{equation}
\end{Theorem}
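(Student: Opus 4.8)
The plan is to upgrade the convergence-in-probability statement of the previous Lemma to an almost-sure, \emph{eventual} statement by means of the first Borel--Cantelli lemma. The target event ``$\exists N'$ such that $\cap_{N=N'}^{\infty}\{\mathcal{T}^{lp}(N)=\mathcal{T}\}$'' is precisely the event that the bad event $A_N\triangleq\{\mathcal{T}^{lp}(N)\ne\mathcal{T}\}$ occurs for only finitely many $N$; its complement is exactly $\{A_N \text{ infinitely often}\}$. Hence it suffices to prove that $\PP(A_N \text{ i.o.})=0$, and by the first Borel--Cantelli lemma this follows once I establish $\sum_{N}\PP(A_N)<\infty$. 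Note that no independence among the $A_N$ is needed for this direction.

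First I would invoke the estimate already derived in the proof of the previous Lemma: for all $N>N_1$, where $N_1$ is the threshold beyond which $\lambda<x_0/2$, one has
\begin{equation}
\PP(A_N)\le C N^{\frac{\epsilon}{2}}e^{-\frac{1}{8}(c_1 x_0)^2 N}+N e^{-c_1^2 n N^{\epsilon}},\nonumber
\end{equation}
with $C=2c_2 s\sqrt{n}$. The finitely many terms with $N\le N_1$ are each bounded by $1$ and so contribute only a finite amount to the sum; it therefore remains to show that the tail $\sum_{N>N_1}\PP(A_N)$ converges.

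Next I would treat the two terms separately. The first decays like a polynomial times a genuine exponential $e^{-aN}$ with $a=\frac{1}{8}(c_1 x_0)^2>0$, so $\sum_N N^{\epsilon/2}e^{-aN}<\infty$ by the ratio test. The second term is the delicate one: here the exponent is a \emph{stretched} exponential $e^{-bN^{\epsilon}}$ with $b=c_1^2 n>0$ and $0<\epsilon<1$. Since $N^{\epsilon}\to\infty$, $e^{-bN^{\epsilon}}$ eventually dominates any polynomial, so $\sum_N N e^{-bN^{\epsilon}}<\infty$ as well. Both series being convergent, $\sum_N\PP(A_N)<\infty$, and the conclusion follows.

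I expect the main obstacle to be twofold. First, the convergence of $\sum_N N e^{-bN^{\epsilon}}$ rests entirely on having $\epsilon>0$: if $\epsilon=0$ the summand would not even tend to zero, which is exactly why the order of $\lambda$ was flagged as essential in an earlier Remark; making this quantitative, for instance by noting $bN^{\epsilon}\ge 2\log N$ for all large $N$ and comparing with $\sum_N N^{-1}$, is the one estimate that deserves care. Second, for the countable intersection over $N$ to be meaningful, all the events $A_N$ must live on a single probability space even though $\vv\in\R^N$ grows with $N$; the clean way to arrange this is to realise the noise as one infinite i.i.d. standard Gaussian sequence and let $\vv$ consist of its first $N$ coordinates, after which Borel--Cantelli applies verbatim.
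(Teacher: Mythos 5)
Your proposal follows the same overall route as the paper: take the bound $\PP(\mathcal{T}^{lp}(N)\ne\mathcal{T})\le C N^{\epsilon/2}e^{-\frac{1}{8}(c_1x_0)^2N}+Ne^{-c_1^2nN^{\epsilon}}$ from the preceding Lemma, show the series of these probabilities converges, and invoke the first Borel--Cantelli lemma; this is exactly the paper's argument, and your remarks that no independence is needed and that the events should be realised on a single probability space via one infinite Gaussian sequence are correct (the paper leaves the latter implicit). The only substantive divergence is in how summability of the stretched-exponential term is certified: the paper compares the sum with an integral and reduces it, via the substitution $x=\frac{1}{2}c_1^2nt^{\epsilon}$, to the Gamma function $\Gamma(1/\epsilon)$, whereas you use eventual polynomial domination. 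Both work, but your specific quantitative suggestion contains a slip: from $bN^{\epsilon}\ge 2\log N$ you get $Ne^{-bN^{\epsilon}}\le N^{-1}$, and the comparison series $\sum_N N^{-1}$ \emph{diverges}, so this particular choice proves nothing. You need $bN^{\epsilon}\ge (2+\delta)\log N$ for some $\delta>0$ (e.g.\ $3\log N$), which indeed holds for all large $N$ since $N^{\epsilon}/\log N\to\infty$ for $\epsilon>0$, and then $Ne^{-bN^{\epsilon}}\le N^{-(1+\delta)}$ is summable. With that one constant corrected, your argument is complete and marginally more elementary than the paper's Gamma-function computation.
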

\begin{proof}
	From the proof in the previous lemma, we have that when $N>N_1$
	\begin{align} \nonumber
 		&\PP(\mathcal{T} \ne {\mathcal{T}}^{lp}(N))\\ \nonumber
	 	&\le C N^{\frac{\epsilon}{2}} e^{-\frac{1}{8}(c_1x_0)^2 N} + N e^{-c_1^2nN^{\epsilon}} \\ \nonumber
 		&=  C e^{-\frac{1}{8}(c_1x_0)^2 N +\frac{ \epsilon}{2} ln(N)} + e^{\ln(N)-c_1^2nN^{\epsilon}} \\ \nonumber
 		&=  C e^{(c_1x_0)^2N(\frac{\epsilon ln(N)}{2(c_1x_0)^2N}-\frac{1}{8})} + e^{c_1^2nN^{\epsilon}(\frac{\ln(N)}{c_1^2nN^{\epsilon}}-1)}.
	\end{align}
	Since $0<\epsilon<1$ and $x_0>0$, one has that $\frac{\epsilon ln(N)}{2(c_1x_0)^2N}$ and $\frac{\ln(N)}{c_1^2nN^{\epsilon}}$ will tend to zero if $N \rightarrow \infty$.
	Hence there exists a number $N_2 \in \N$ such that for all $N>N_3 \triangleq \max(N_1,N_2) $ one has that $\frac{\epsilon ln(N)}{2(c_1x_0)^2N} < \frac{1}{16}$
	and $\frac{\ln(N)}{c_1^2nN^{\epsilon}} < \frac{1}{2}$.
	Hence
	\begin{align} \nonumber
 		&\sum_{N = N_3}^{\infty}\PP( {\mathcal{T}}^{lp}(N)\ne \mathcal{T} )\\ \nonumber
 		&\le \sum_{N = N_3}^{\infty} Ce^{-\frac{1}{16}(c_1x_0)^2N} + \sum_{N = N_3}^{\infty} e^{-\frac{1}{2}c_1^2nN^{\epsilon}} \\ \nonumber
 		&\le \int_{N = N_3 - 1}^{\infty} Ce^{-\frac{1}{16}(c_1x_0)^2t} dt + \int_{N_3 - 1}^{\infty} e^{-\frac{1}{2}c_1^2nt^{\epsilon}}dt \\ \nonumber
 		& = A + B \nonumber.
	\end{align}
	Furthermore, it can be seen that $$A = \int_{N = N_3 - 1}^{\infty} C e^{-\frac{1}{16}(c_1x_0)^2t} dt < \infty .$$
	In the following, we will show that $B = \int_{N_3 - 1}^{\infty} e^{-\frac{1}{2}c_1^2nt^{\epsilon}}dt < \infty$ .
	By a change of variable using $x = \frac{1}{2}c_1^2nt^{\epsilon}$, we have that
	\begin{align}\nonumber
		B &=\frac{1}{c_1^2n\epsilon} \int_{\frac{1}{2}c_1^2n(N_3-1)^{\epsilon}}^{\infty}x^{\frac{1}{\epsilon}-1} e^{-x} dx
  		< \frac{1}{c_1^2n\epsilon} \Gamma\left(\frac{1}{\epsilon}\right)
  		< \infty \nonumber
	\end{align} with $\Gamma$ the Gamma function.
	And hence
	\begin{equation} \nonumber
		\sum_{N = N_3}^{\infty}\PP( {\mathcal{T}}^{lp}(N)\ne \mathcal{T} ) < \infty.
	\end{equation}
	Application of the Borel-Cantelli lemma [4] implies that the events in $\{\mathcal{T} \ne {\mathcal{T}}^{lp}(N)\}_{N=N_3}^{\infty}$ will not happen infinitely often, which concludes the result.
\end{proof}
\end{section}

\begin{section}{Illustrative Experiments}

This section supports the findings in the previous section with numerical examples and make the comparisons with the other algorithms which possess the ORACLE property in the literature.
\begin{subsection}{Experiment 1}
This example is taken from \cite{c12}. The setups are repeated as follows.
\begin{itemize}
\item
$\xv^{0}$ is set to be $(3,1.5,0,0,2,0,0,0)^{T}$;
\item
 Rows of matrix $A$ are i.i.d. normal vectors;
\item
The correlation between the $j_1$-th and the $j_2$-th elements of each row are given as $0.5^{|j_1-j_2|}$;
\item
The noise term $\vv\in\R^N$ follows distribution $\mathcal{N}(0,I_{N})$.
\end{itemize}
Based on these setups, the proposed method and also the methods presented in \cite{c1} (the A-SPARSEVA-AIC-RE method and the A-SPARSEVA-BIC-RE methods) and \cite{c12} (the ADALASSO method) are applied to recover $\xv^{0}$.
In this experiment,  $\epsilon$ for the proposed method is set to $\frac{1}{3}$; $\lambda_N$ for 'ADALASSO' is chosen as $N^{1/2-\gamma/4}$ (this choice satisfies all the assumptions in Theorem 2 in \cite{c12}), and $\gamma$ is set to 1; the thresholding value (for detecting zero components from the solution of the Lasso problem) for the 'A-SPARSEVA-AIC-RE' and 'A-SPARSEVA-BIC-RE' are set to be $10^{-5}$ as suggested in \cite{c1}. For the comparison, we also include the experiment result obtained by using the LASSO method, in which we set the tuning parameter as $\sqrt{N}$. In Fig. \ref{fig.exp}, for every $N$, experiment is repeated 50 times to get the estimated MSE.
The following abbreviations are used in Fig. \ref{fig.exp}:
(1) the curve with tag 'LSE' gives the MSE of the estimates by the LSE algorithm;
(2) the curve with tag 'LP + RE-LSE' gives the MSE of the estimates given by the proposed algorithm;
(3) the curve with tag 'ORACLE-LSE' gives the MSE of the estimates by the ORACLE LSE;
(4) the curves with tags 'A-SPARSEVA-AIC-RE' and 'A-SPARSEVA-BIC-RE' give the MSE of the estimates by the methods presented in \cite{c1};
(5) the curve with tag 'ADALASSO' gives the MSE of the estimates by the ADALASSO method presented in \cite{c12};
(6) the curve with tag 'LASSO' gives the MSE of the estimates of the LASSO method.

Note that, when $N$ becomes large, the curves 'LP + RE-LSE' and 'ORACLE-LSE' exactly match each other.

\begin{figure}[htbp]
\centering
\includegraphics[width = 80mm]{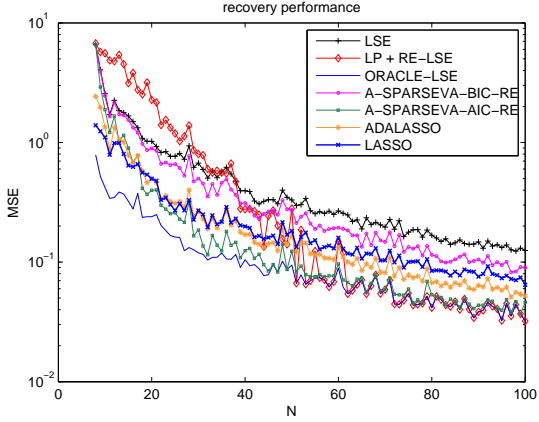}
  \caption{
  	  	Performance of the different estimators from $N$ observations to estimate $\xv^{0}$. This picture indicates that the proposed estimator will give exactly the same performance as the ORACLE estimator for a large $N$ $(N\approx75)$. }
\label{fig.exp}
\end{figure}
Fig. \ref{fig.support} demonstrates the efficacy of support recovery of the LP formulation in Eq. (\ref{eq.lp}) for different choices of $\epsilon$.
In the plot, 'portion' is defined as the ratio of successful trials over the total number of trials.
We conclude the empirical observations for this experiment in the caption of the figure.

\begin{figure}[htbp]
\centering
\includegraphics[width = 80mm]{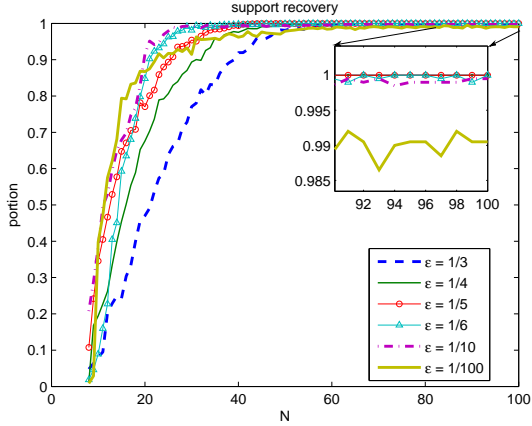}
  \caption{Support recovery performance of Eq. (\ref{eq.lp}) for different choices of $\epsilon$. Empirically, we observe that:
  1) When $\epsilon$ is chosen to be small, the ratio for successful support recovery will be larger when N is small; but when N is large, the ratio for successful support recovery will converge slower to 100\% and oscillation exists. This can be observed in the zoomed-in part.
   2) When $\epsilon$ is chosen to be large, the ratio for successful support recovery will be smaller when N is small; but when N is large, the ratio for successful support recovery will go faster to 100\%  and no oscillation exists, see also the zoomed-in part in the figure.}
\label{fig.support}
\end{figure}

In practice, cross validation technique could be exploited to choose the tuning parameters. In the following, we will take the ADALASSO and the proposed method for granted to illustrate the idea and compare the performances for both methods when the parameters are obtained by the cross validation technique. In the ADALASSO algorithm and the proposed algorithm, there are two tuning parameters, namely $\gamma$ for the ADALASSO, and $\epsilon$ for the proposed method. In the following part, we will apply the 5-fold cross-validation method (see \cite{c18}) to choose the tuning parameters and then compare their performances based on the chosen tuning parameters. The procedure is as follows. At first, the tuning parameter is obtained by 5-fold cross validation, then it is applied to an independently generated test data which has the same dimension as the training data and the evaluation data. For different $N$, we run 100 i.i.d. realizations. In each realization, we record the value $\|\hat{\xv} - \xv^{0}\|_2^2$, where $\hat{\xv}$ denotes the estimate obtained by the estimator. $\epsilon$ are selected from $\{1/8,1/4,1/2\}$, $\gamma$ are selected from $\{1/2,1,2\}$ , and $N$ are chosen from $\{20,50,100,200,300,500\}$.  The results are reported in Fig. 4.

\begin{figure*}[thpb]
\centering
\label{fig.comp}
\subfigure[]{\includegraphics[width = 80mm]{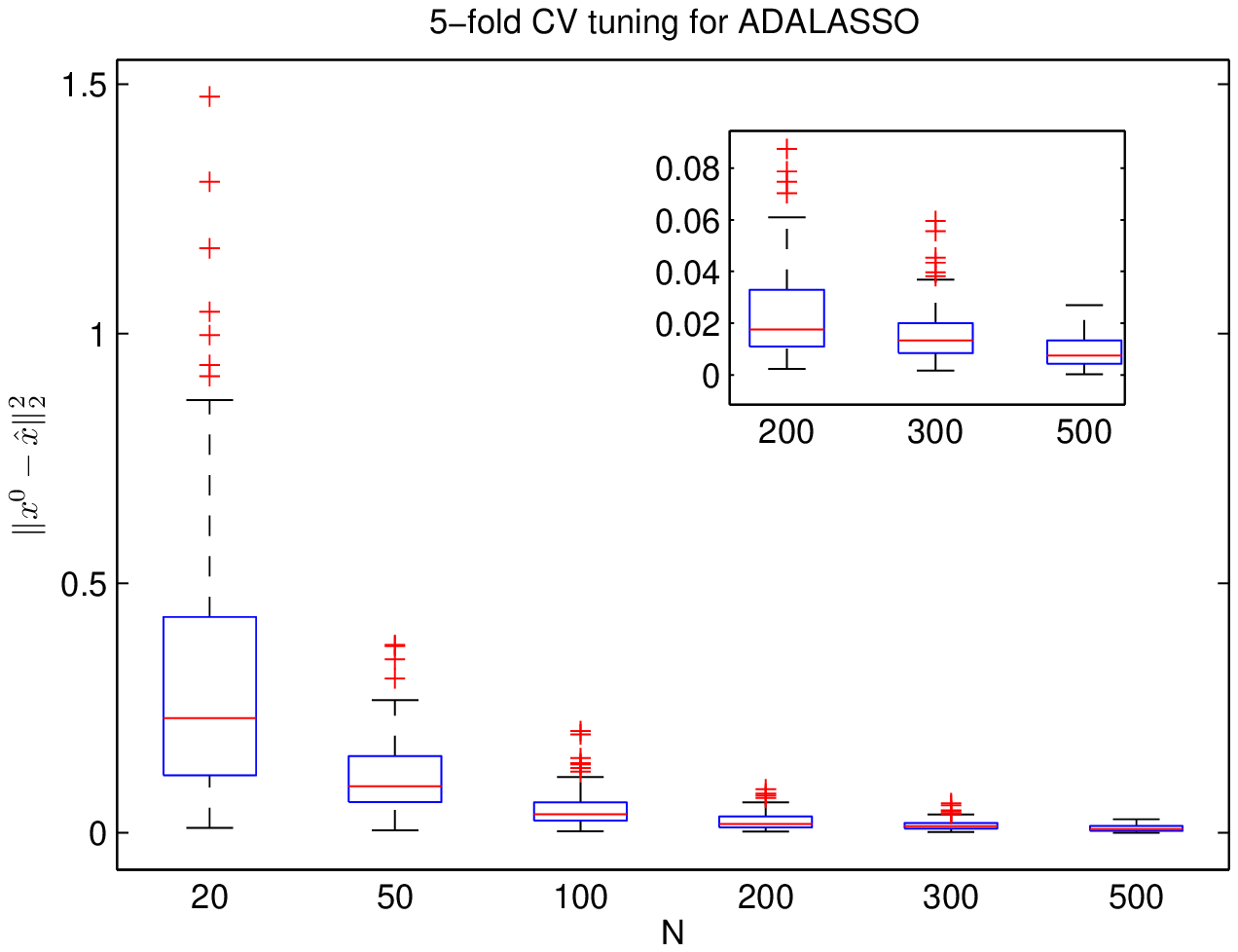}}
\subfigure[]{\includegraphics[width = 80mm]{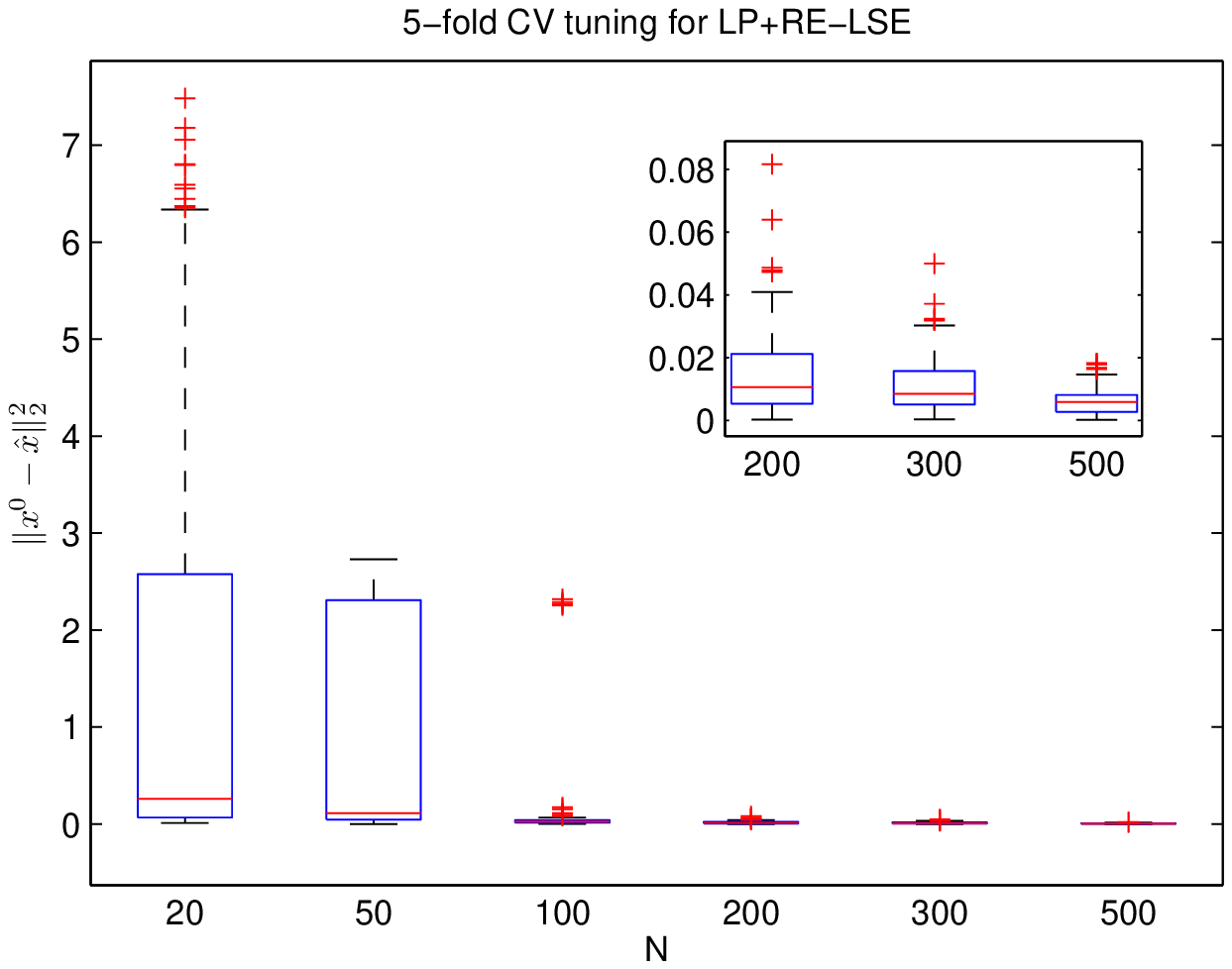}}
\caption{This figure demonstrates the boxplots of the recovery error obtained through the ADALASSO estimator and the proposed estimator when the tuning parameters are chosen by the 5-fold cross validation method. From this figure, we can see that performances of both methods are similar when $N$ is large, see the zoomed-in part in the figures. It can also be observed that when $N$ is small, the ADALASSO method has smaller recovery error compared with the proposed method.}
\end{figure*}
\end{subsection}

\begin{subsection}{Experiment 2}
In this part, we perform an experiment for recovering the sinusoids from noisy measurements.
The data is generated as follows:
\begin{equation}\nonumber
	y(t) = \sum_{k'=1}^{n'} c_{i_{k'}} \sin(w_{i_{k'}} t) + v(t).
	\label{eq.sin}
\end{equation}
Here both $\{w_{i_{k'}}\}_{k'}$ and $\{c_{i_{k'}}\}_{k'}$ are unknown, but we know that the frequencies do belong to a (larger, but of constant size) set $\{w_{k}\}_{k=1}^{n}$ of $n$ elements.
By sampling the system with period $t_s$, we obtain the system
\begin{equation}
	\mathbf{y} = \AM \mathbf{c}^{0} + \mathbf{v},
	\label{eq.sinsys}
\end{equation}	
where $\mathbf{y} = [y(t_s),\cdots, y(N t_s)]^{T}$.
The matrix $\AM \in \R^{N\times n}$ is defined as follows.
The $i$-th row of $\AM$ is given by
\begin{equation}
	\AM_i = \left[\sin(i w_1 t_s), \sin(i w_2 t_s),\dots, \sin(i w_n t_s)\right],
\end{equation}	
 for $i = 1,\cdots,N$. The parameter term and noise term are defined as
 $\mathbf{c}^{0} = [c_1,c_2,\cdots,c_n]^T$, and $\mathbf{v} =  [v(t_s), v(2 t_s),\cdots, v(N t_s)]^{T}$.

 In this experiment, $n=10$ and $\mathbf{c}^{0}=(1,1,1,0,\cdots,0)^T$,  $w_k = k$ for $k= 1,2,\cdots,n$. We increase $N$ up to 500 and the noise vector $\vv$ satisfies $\mathbf{v} \backsim \mathcal{N}(0,I_{N})$. We also assume that only the first three entries in $\{w_{k}\}_{k=1}^{n}$ occur effectively in the system of Eq. (\ref{eq.sin}) and the corresponding amplitudes are set to 1, i.e. $n'=3$ and $i_1 =1$, $i_2 =2$, $i_3 =3$. The sampling period $t_s$ is set to $0.1s$.

 The result using the proposed algorithm to recover $\xv^0$ is displayed in Fig. \ref{fig.exp2}. It is again clear that the proposed estimator is as efficient as the ORACLE estimator if one has enough samples.
 \begin{figure}[htbp]
\centering
\includegraphics[width = 80mm]{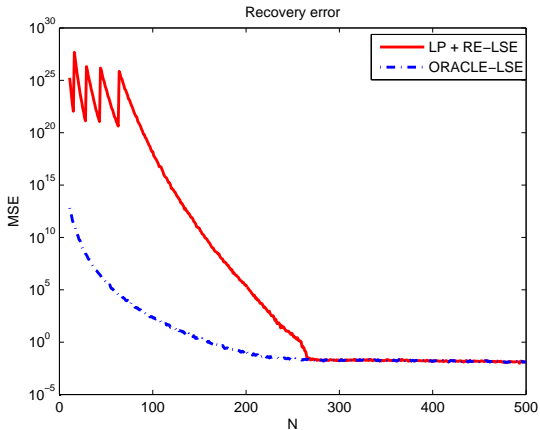}
  \caption{
  	Performance of applying the proposed estimator to recovery sinusoids functions from $N$  observations in Experiment 2.
  	This example also indicates that after a finite number the estimate is exactly equal to the ORACLE estimator.}
\label{fig.exp2}
\end{figure}

This is indeed predicted by the theory above since the $\AM$ in Eq. (\ref{eq.sinsys}) obeys the assumption of Eq. (\ref{eq.sing}).
This follows from the proposition given as:
\begin{Proposition}
   	There exist constants $\{C_{i,j}\}_{0\leq i,j\leq n}$ which do not depend on $N$, such that the following results hold.
	For any $1\le i\neq j \le n$, one has that:	
	\begin{equation}
		\left|(A^{T}A)_{i,j}\right|=\left|\sum_{t=1}^N \sin(t w_i t_s) \sin(t w_j t_s)\right| \leq C_{i,j}
		\label{eq.sin1}
	\end{equation}
	and for any $1\le i \le n$ that:
	\begin{equation}
		(A^{T}A)_{i,i}=\sum_{t=1}^N \left(\sin(t w_i t_s)\right)^2 \ge \frac{N}{2} - C_{i,i}.
		\label{eq.sin2}
	\end{equation}
\end{Proposition}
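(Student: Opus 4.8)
The plan is to reduce both quantities to finite sums of cosines by means of the elementary product-to-sum identities, and then to bound such cosine sums uniformly in $N$. For the off-diagonal case $i\neq j$, I would apply $\sin(a)\sin(b)=\tfrac{1}{2}\big[\cos(a-b)-\cos(a+b)\big]$ with $a=t\,w_i t_s$ and $b=t\,w_j t_s$, which rewrites the sum in Eq.~(\ref{eq.sin1}) as
\[
\tfrac{1}{2}\sum_{t=1}^N \cos\!\big(t(w_i-w_j)t_s\big)\;-\;\tfrac{1}{2}\sum_{t=1}^N \cos\!\big(t(w_i+w_j)t_s\big).
\]
Thus everything is reduced to controlling a generic finite cosine sum $\sum_{t=1}^N\cos(t\alpha)$.

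The key estimate I would establish is that such a sum is bounded independently of $N$. Writing $\cos(t\alpha)=\mathrm{Re}\,e^{it\alpha}$ and summing the geometric series $\sum_{t=1}^N e^{it\alpha}=e^{i\alpha}\frac{e^{iN\alpha}-1}{e^{i\alpha}-1}$, and using $|e^{iN\alpha}-1|\le 2$ together with $|e^{i\alpha}-1|=2|\sin(\alpha/2)|$, gives
\[
\Big|\sum_{t=1}^N \cos(t\alpha)\Big|\;\le\;\frac{2}{|e^{i\alpha}-1|}\;=\;\frac{1}{|\sin(\alpha/2)|},
\]
valid whenever $\alpha\not\equiv 0\pmod{2\pi}$. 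Applying this twice, with $\alpha=(w_i-w_j)t_s$ and $\alpha=(w_i+w_j)t_s$, and adding the two bounds produces a constant $C_{i,j}$ depending only on $i,j$ (through the frequencies and $t_s$) and not on $N$, which is exactly Eq.~(\ref{eq.sin1}).

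For the diagonal entries I would instead use $\sin^2(a)=\tfrac{1}{2}\big(1-\cos(2a)\big)$ with $a=t\,w_i t_s$, so that
\[
\sum_{t=1}^N \sin^2(t\,w_i t_s)\;=\;\frac{N}{2}\;-\;\frac{1}{2}\sum_{t=1}^N \cos(2t\,w_i t_s).
\]
The leading term is precisely $N/2$, and the residual cosine sum is bounded by the same argument (now with $\alpha=2w_i t_s$) by $1/|\sin(w_i t_s)|$. Setting $C_{i,i}=\tfrac{1}{2}/|\sin(w_i t_s)|$ yields the lower bound $\tfrac{N}{2}-C_{i,i}$ of Eq.~(\ref{eq.sin2}).

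The main obstacle is purely a non-degeneracy (non-aliasing) issue: the geometric-sum bound breaks down when a denominator vanishes, so the argument requires $(w_i\pm w_j)t_s\not\equiv 0\pmod{2\pi}$ for the off-diagonal terms and $2w_i t_s\not\equiv 0\pmod{2\pi}$ for the diagonal. For the experiment's choice $w_k=k$ and $t_s=0.1$ these conditions hold, so all constants are finite. Once the proposition is in hand, it gives Eq.~(\ref{eq.sing}) by a standard Gershgorin/perturbation argument: the diagonal of $A^{T}A$ grows like $N/2$ while every off-diagonal entry stays $O(1)$, so every eigenvalue of $A^{T}A$ is $\tfrac{N}{2}+O(1)$ and hence, for $N$ large, each singular value of $A$ lies in an interval $[c_1\sqrt{N},c_2\sqrt{N}]$ as required.
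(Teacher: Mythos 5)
Your proof is correct and follows essentially the same route as the paper: product-to-sum identities reduce everything to the cosine sum $\sum_{t=1}^N\cos(t\alpha)$, which is then bounded uniformly in $N$ via the geometric series. The only (welcome) difference is that you state explicitly the non-degeneracy condition $(w_i\pm w_j)t_s\not\equiv 0 \pmod{2\pi}$ under which the denominator $|1-e^{j\alpha}|$ is nonzero, a hypothesis the paper leaves implicit.
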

The proof is given in Appendix A.
With this proposition, an application of Ger\v{s}gorin circle theorem implies that the eigenvalues of $\AM^T\AM$ will increase with the order of $N$, which in turn implies Eq. (\ref{eq.sing}).
\end{subsection}
\end{section}
\begin{section}{Conclusion}
	This note presents an algorithm for solving an over-determined linear system from noisy observations,
	specializing to the case where the true 'parameter' vector is sparse.
	The proposed method does not need one to solve explicitly an optimization problem: it rather requires one to compute twice the LSE step, as well to perform a computationally cheap soft-thresholding step.
	Also, it is shown formally that the proposed method achieves the ORACLE property.
An open question is to quantify how many samples would be sufficient to guarantee exact recovery of $\xv^0$ for
given sparsity level $s$. In this note, we resort to the asymptotic Borel-Cantelli Lemma ('there exists such a number'),
but it is often of interest to have an explicit characterization of this number.
Another open question is how to find a suitable weighting matrix $\WM$ which can further improve the performance of the proposed algorithm.
\end{section}

\appendix
\section{Proof of Proposition 4}

\begin{proof}
The proof of (\ref{eq.sin1}) goes as follows. First
\begin{align}
	&\left|\sum_{t=1}^N \sin(t w_i t_s) \sin(t w_j t_s)\right| \nonumber \\
	&=\frac{1}{2} \left|\sum_{t=1}^N \left(\cos(t (w_i - w_j)t_s) - \cos(t (w_i + w_j)t_s)\right )\right| \nonumber \\
	& \le \frac{1}{2} \left|\sum_{t=1}^N \cos(t (w_i - w_j)t_s)\right| + \frac{1}{2} \left|\sum_{t=1}^N \cos(t (w_i + w_j)t_s)\right|.  \nonumber
\end{align}
We focus on bounding the term $\left|\sum_{t=1}^N \cos(t (w_i - w_j)t_s)\right|$,
the bound of the other term will follow along the same lines.

\begin{align} \nonumber
	&\left| \sum_{t=1}^N \cos(t (w_i - w_j)t_s) \right| \\ \nonumber
	&= \left| \operatorname{Re}\left(\frac{1 - e^{j(N+1) (w_i - w_j)t_s}}{1 - e^{j(w_i - w_j)t_s}}\right) -1 \right | \\ \nonumber
	&\le \left| \frac{1 - e^{j(N+1) (w_i - w_j)t_s}}{1 - e^{j(w_i - w_j)t_s}}\right | + 1 \\ \nonumber
	&\le \frac{2}{\left| 1 - e^{j(w_i - w_j)t_s}\right |} + 1, \nonumber
\end{align}
which is a constant which does not depend on $N$, so inequality (\ref{eq.sin1}) is obtained.

In order to prove inequality (\ref{eq.sin2}), observe that
\begin{align}
	\sum_{t=1}^N \left(\sin(t w_i t_s)\right)^2
		= \frac{1}{2}\sum_{t=1}^N \left(1 - \cos(2 t w_i t_s)\right) \nonumber \\
                  \ge \frac{N}{2} - \frac{1}{2} \left|\sum_{t=1}^N\cos(2 t w_i t_s) \right|. \nonumber
\end{align}
Using previous bounding method, $\frac{1}{2}\left|\sum_{t=1}^N\cos(2 t w_i t_s) \right|$ is also bounded by a constant $C_{i,i}$ which does not depend on $N$. This concludes the proof.
\end{proof}
\end{document}